\documentclass[11pt,a4paper]{article}
\usepackage[utf8]{inputenc}
\usepackage{amsmath, amssymb, amsthm}
\usepackage{geometry}
\usepackage{hyperref}
\usepackage{booktabs}

\title{DRESS and the WL Hierarchy: Climbing One Deletion at a Time}
\author{Eduar Castrillo Velilla \\ \texttt{eduarcastrillo@gmail.com} \\ ORCID: \href{https://orcid.org/0009-0005-2492-0957}{0009-0005-2492-0957}}
\date{}

\newtheorem{theorem}{Theorem}
\newtheorem{lemma}[theorem]{Lemma}

\newtheorem{definition}[theorem]{Definition}
\newtheorem{conjecture}[theorem]{Conjecture}
\newtheorem{remark}[theorem]{Remark}

\begin{document}

\maketitle

\begin{abstract}
DRESS is a deterministic, parameter-free framework that iteratively refines the structural similarity of edges in a graph to produce a canonical fingerprint: a real-valued edge vector, obtained by converging a non-linear dynamical system to its unique fixed point. $\Delta^k$-DRESS extends the framework by running DRESS on every $k$-vertex-deleted subgraph of $G$; it was introduced and empirically evaluated in the companion paper, where the CFI staircase showed that $\Delta^k$-DRESS matches $(k{+}2)$-WL for $k = 0, 1, 2, 3$.

This paper provides the theoretical justification. The main contributions are: (i) an unconditional proof that $\Delta^k$-DRESS distinguishes every CFI$(K_{k+3})$ pair for all $k \geq 0$ (CFI Staircase Theorem), established via a new CFI Deck Separation theorem and the Virtual Pebble Lemma; and (ii) a conditional proof that $\Delta^k$-DRESS $\geq$ $(k{+}2)$-WL for all graphs and all $k \geq 0$, assuming a single structural conjecture about the WL hierarchy (WL-Deck Separation).
\end{abstract}

\section{Introduction}

\subsection{Background: DRESS and \texorpdfstring{$\Delta^k$}{Delta-k}-DRESS}

The companion paper \cite{castrillo2026dress} introduces DRESS, a parameter-free framework that assigns to any graph a canonical fingerprint vector. The Original-DRESS equation is a non-linear dynamical system on edge similarities that converges to a unique fixed point $d^* \in [0,2]^{|E|}$ in $\mathcal{O}(I \cdot m \cdot d_{\max})$ total time, with full numerical stability and no learnable parameters. The sorted fingerprint $\operatorname{sort}(d^*)$ is an isomorphism invariant. The companion paper proves that Original-DRESS acts as an empirical equivalent to $2$-WL.

The companion paper also introduces $\Delta^k$-DRESS: for a deletion depth $k \geq 0$, run DRESS on every subgraph $G \setminus S$ obtained by removing a $k$-element vertex subset $S$, and collect the resulting fingerprints into a multiset. At depth $k=0$ this is Original-DRESS; at depth $k=1$ it is $\Delta$-DRESS from \cite{castrillo2026dress}.

\subsection{Empirical Evidence from Paper~1}

The companion paper \cite{castrillo2026dress} evaluated $\Delta^k$-DRESS on the Cai--F\"urer--Immerman (CFI) staircase \cite{cai1992optimal} for $k = 0, 1, 2, 3$. The CFI pair $\text{CFI}(K_n)$ requires exactly $(n{-}1)$-WL to distinguish. The results reveal a precise staircase: $\Delta^k$-DRESS distinguishes $\text{CFI}(K_{k+3})$ (which requires $(k{+}2)$-WL) but fails on $\text{CFI}(K_{k+4})$ (which requires $(k{+}3)$-WL). The table is reproduced in Section~\ref{sec:cfi} for reference.

\subsection{Contribution of This Paper}

The empirical pattern from \cite{castrillo2026dress} is striking: each deletion level adds exactly one WL level. The purpose of this paper is to explain \emph{why}. We prove:

\begin{enumerate}
    \item \textbf{(Unconditional, CFI).} For all $k \geq 0$, $\Delta^k$-DRESS distinguishes CFI$(K_{k+3})$ from CFI'$(K_{k+3})$ (CFI Staircase Theorem~\ref{thm:cfi-staircase}). we establish a CFI Deck Separation theorem (Theorem~\ref{thm:cfi-deck}) guaranteeing non-isomorphic deletion cards, and the Virtual Pebble Lemma (Lemma~\ref{lem:virtual-pebble}) showing that damaged CFI cards are distinguished by exactly $(n{-}2)$-WL, one level less than the undamaged pair, matching the inductive hypothesis precisely.
    \item \textbf{(Conditional, all graphs).} For every $k \geq 0$, $\Delta^k$-DRESS $\geq$ $(k{+}2)$-WL, conditional on Conjecture~\ref{conj:wl-deck} alone (Theorem~\ref{thm:delta-k-wl}).
    \item We identify precisely the missing bridge for the general case: WL-Deck Separation is the gap between WL individualization (classical) and WL deletion (open).
\end{enumerate}

\section{Preliminaries}

\subsection{DRESS}

The Original-DRESS equation \cite{castrillo2018dynamicstructuralsimilaritygraphs} is a parameter-free, non-linear dynamical system on edges:

\begin{equation}
d_{uv}^{(t+1)} = \frac{\sum_{x \in N[u] \cap N[v]} \bigl(d_{ux}^{(t)} + d_{xv}^{(t)}\bigr)}{\|u\|^{(t)} \cdot \|v\|^{(t)}}
\end{equation}

where $\|u\|^{(t)} = \sqrt{\sum_{x \in N[u]} d_{ux}^{(t)}}$ is the vertex norm and $N[u] = N(u) \cup \{u\}$ is the closed neighborhood. Self-loops are added to every vertex; the invariant $d_{uu} = 2$ holds at every iteration. The equation converges to a unique fixed point $d^*$ for any positive initialization \cite{castrillo2026dress}. In the unweighted case $d^* \in [0, 2]^{|E|}$; with non-uniform edge weights, values may exceed~$2$.

The graph fingerprint is the sorted vector $\operatorname{sort}(d^*)$, which uniquely identifies the multiset of converged edge values. Two graphs are declared non-isomorphic if their fingerprints differ.

\subsection{\texorpdfstring{$\Delta$}{Delta}-DRESS}

$\Delta$-DRESS \cite{castrillo2026dress} runs DRESS on each vertex-deleted subgraph:

\begin{definition}[$\Delta$-DRESS]
Given a graph $G = (V, E)$ and a DRESS variant $\mathcal{F}$, the $\Delta$-DRESS fingerprint is the multiset of sorted edge-value vectors obtained by running $\mathcal{F}$ on each vertex-deleted subgraph:
\[
\Delta\text{-DRESS}(\mathcal{F}, G) = \{\!\{ \operatorname{sort}(\mathcal{F}(G \setminus \{v\})) : v \in V \}\!\}
\]
where $G \setminus \{v\}$ is the subgraph induced by $V \setminus \{v\}$.
\end{definition}

\section{\texorpdfstring{$\Delta^k$}{Delta-k}-DRESS}

We now define the $k$-vertex-deletion operator.

\begin{definition}[$\Delta^k$-DRESS (Deletion)]
\label{def:delta-k}
For a graph $G = (V, E)$, a DRESS variant $\mathcal{F}$, and a deletion depth $k \ge 0$, the $\Delta^k$-DRESS fingerprint is the multiset of per-deletion sorted edge-value vectors:
\[
\Delta^k\text{-DRESS}(\mathcal{F}, G) = \{\!\{ \operatorname{sort}(\mathcal{F}(G \setminus S)) : S \subset V,\; |S| = k \}\!\}
\]
where $G \setminus S$ is the subgraph induced by $V \setminus S$ and $\mathcal{F}(G \setminus S)$ is the converged DRESS edge-value vector.
\end{definition}

\section{Expressiveness Relative to the Weisfeiler--Lehman Hierarchy}

We prove two results. First, the CFI staircase theorem (Section~\ref{sec:cfi}) is proved unconditionally. Second, the general theorem below is conditional on a single structural conjecture about the WL hierarchy and vertex deletion.

The base case ($k=0$) is \cite[Theorem~4]{castrillo2026dress}: Original-DRESS is at least as powerful as $2$-WL. We use it as a black box:

\begin{theorem}[{\cite[Theorem~4]{castrillo2026dress}}]
\label{thm:dress-2wl}
If $2$-WL distinguishes $G$ from $H$, then $\operatorname{sort}(\mathbf{d}^*_G)\neq\operatorname{sort}(\mathbf{d}^*_H)$.
\end{theorem}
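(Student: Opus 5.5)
Since this is \cite[Theorem~4]{castrillo2026dress}, the paper uses it as a black box. If I had to reconstruct the argument, I would work by contraposition: assume $\operatorname{sort}(\mathbf{d}^*_G)=\operatorname{sort}(\mathbf{d}^*_H)$ and try to show that $2$-WL cannot separate $G$ from $H$. I cannot close this argument unconditionally (see the last paragraph).

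\textbf{Step 1: lift DRESS to pairs.} I would extend DRESS to all ordered pairs of vertices. Set $d_{uv}=0$ for non-adjacent pairs and keep $d_{uu}=2$ on the diagonal. The update for $d_{uv}$ then aggregates, over all $x$, the products of adjacency indicators of $(u,x)$ and $(x,v)$ weighted by $d_{ux}+d_{xv}$. This is exactly the pattern of a folklore $2$-WL refinement step: the new value of the pair $(u,v)$ depends on the multiset $\{\!\{(c(u,x),c(x,v)) : x\in V\}\!\}$, followed by a normalisation by vertex norms. The vertex norms are themselves functions of the diagonal-row multisets.

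\textbf{Step 2: the easy direction.} By induction on $t$, I would show that each iterate $d^{(t)}_{uv}$ is a function of the $t$-round $2$-WL colour of $(u,v)$. Passing to the unique fixed point (uniqueness is from \cite{castrillo2026dress}), $d^*_{uv}$ is then a function of the stable $2$-WL colour. This shows that DRESS is never finer than $2$-WL. It is not the statement we want, but it fixes the correspondence between DRESS values and colour classes.

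\textbf{Step 3: the converse (main obstacle).} We need the reverse implication: distinct stable $2$-WL colour classes must produce distinct multisets of fixed-point values. My plan is to consider the partition $\mathcal{P}$ of pairs by $(\text{adjacency}, d^*_{uv})$ and to prove that $\mathcal{P}$ is stable under one $2$-WL refinement step. Granting that, $\mathcal{P}$ is at least as fine as the coarsest stable colouring. Equal sorted fingerprints would then force equal $2$-WL histograms, because the histogram can be read off a stable partition together with its value-labels.

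The difficulty is that the DRESS update aggregates by a \emph{sum} $\sum_x (d_{ux}+d_{xv})$. This sum is not injective on multisets, so two pairs could have equal $d^*$ values while seeing different multisets of neighbouring classes. The first thing I would try is a genericity argument: treat the fixed point as an algebraic function of the graph structure and argue that accidental coincidences of such sums are non-generic, for example by perturbing the initialisation or the weights and using uniqueness and analyticity of the fixed point.

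I expect this injectivity step to be exactly where an unconditional proof is hardest. It may only be provable up to measure-zero coincidences, which is consistent with the companion paper describing DRESS as an \emph{empirical} equivalent of $2$-WL.
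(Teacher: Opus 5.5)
There is a genuine gap in Step~3, and it cannot be closed. The statement is false under the indexing the paper itself relies on. (The paper gives no proof here; it only cites the companion paper.)

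The paper uses this theorem as the base case of the CFI staircase. There, $2$-WL must separate CFI$(K_3)$, i.e.\ $2K_3$ from $C_6$. Colour refinement cannot do that, so $2$-WL here is the pair refinement equivalent to $C^3$.

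Now take $G=C_8$ and $H=2C_4$. In both graphs every edge $uv$ has $N[u]\cap N[v]=\{u,v\}$, and both endpoints have degree $2$. So if all edge values equal $c$, the update gives every edge the value $(4+2c)/(2+2c)=(2+c)/(1+c)$ in both graphs. The iterates therefore stay constant and identical, $d^*\equiv\sqrt{2}$ on all eight edges of each graph, and $\operatorname{sort}(\mathbf{d}^*_G)=\operatorname{sort}(\mathbf{d}^*_H)$. Yet one round of $2$-WL separates them: the $C^3$ sentence $\exists x\,\exists y\,\bigl(x\neq y\wedge\neg E(x,y)\wedge\exists^{\geq 2}z\,(E(x,z)\wedge E(z,y))\bigr)$ holds in $2C_4$ but not in $C_8$. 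The same happens for $C_{2m}$ versus $2C_m$ for every $m\geq 4$.

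The example shows exactly where your plan breaks. In your lifting (Step~1), non-adjacent pairs are frozen at $0$ forever. The sum also runs only over $x$ for which $(u,x)$ and $(x,v)$ are both edges or diagonal. So lifted DRESS is not ``exactly'' a folklore $2$-WL step; it is a restricted refinement that never refines non-edges.

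Consequently your partition $\mathcal{P}$ by (adjacency, $d^*_{uv}$) puts all non-edges of $C_8$ into a single class. A single $2$-WL round splits them by number of common neighbours (one at distance $2$, none at distances $3,4$). The stability claim you want in Step~3 is therefore false, not merely hard.

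The genericity idea cannot rescue it either. The fixed point is unique, so perturbing the initialisation changes nothing. And the collision above is forced by identical local structure rather than an accidental coincidence of sums, so it is not a measure-zero phenomenon.

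What does survive is your Step~2: $d^*_{uv}$ is a function of the stable $2$-WL colour, so DRESS is never finer than $2$-WL. The converse direction asserted in the statement does not hold.
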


Before stating the main inductive theorem, we recall a classical characterization of the WL hierarchy via vertex individualization.

\begin{lemma}[WL and vertex individualization]
\label{lem:wl-individualization}
Let $j \geq 2$ and let $G$, $H$ be two graphs. If $(j+1)$-WL distinguishes $G$ from $H$, then the multisets of $j$-WL colorings over all single-vertex individualizations differ:
\[
\{\!\{ \text{$j$-WL}(G, v) : v \in V(G) \}\!\} \neq \{\!\{ \text{$j$-WL}(H, w) : w \in V(H) \}\!\}
\]
where $(G,v)$ denotes the graph $G$ with vertex $v$ individualized (given a unique color).
\end{lemma}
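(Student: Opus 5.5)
We argue by contraposition. Suppose the two multisets coincide. Then there is a bijection $f\colon V(G)\to V(H)$ with $\text{$j$-WL}(G,v)=\text{$j$-WL}(H,f(v))$ for every $v$. From this we want to conclude that $(j+1)$-WL does not distinguish $G$ from $H$. Throughout we use the standard logical and game characterisation (Cai--F\"urer--Immerman, Hella). $j$-WL fails to distinguish two (vertex-coloured) graphs iff they satisfy the same sentences of the counting logic $\mathsf{C}^{j+1}$, iff Duplicator wins the bijective $(j+1)$-pebble game on them. Individualising $v$ in $G$ and $f(v)$ in $H$ amounts to fixing an extra pebble pair $(v,f(v))$ that is never moved.

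\textbf{Step 1 (translation).} Fix the convention for $j$-WL, which the paper leaves implicit, and restate the hypothesis in game form. For every $v$, Duplicator wins the bijective $(j+1)$-pebble game on $(G,v)$ versus $(H,f(v))$. Equivalently, $(G,v)\equiv_{\mathsf{C}^{j+1}}(H,f(v))$, where the individualised vertex is a constant symbol.

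\textbf{Step 2 (one round).} Consider the bijective $(j+2)$-pebble game on $G,H$, which characterises $(j+1)$-WL. In the first round Duplicator plays the bijection $f$. Whatever vertex $v$ Spoiler pebbles, the position becomes $(G,v)$ versus $(H,f(v))$, with $j+1$ pebble pairs still free. As long as Spoiler never lifts the first pebble pair, Duplicator copies her winning strategy from Step~1. The counting argument works cleanly here. Suppose a sentence $\exists^{\ge m}x\,\varphi(x)$ of $\mathsf{C}^{j+2}$ has $\varphi$ in which $x$ is never requantified. Then $\varphi$ is a $\mathsf{C}^{j+1}$-sentence with constant $x$. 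The number of witnesses in $G$ and in $H$ therefore agree, since $f$ matches the $\mathsf{C}^{j+1}$-types of individualised graphs.

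\textbf{Step 3 (the obstacle).} The main difficulty is that in the full $(j+2)$-pebble game Spoiler may later lift the pebble on $v$ and reuse it. Logically, $\mathsf{C}^{j+2}$ formulas may requantify the variable $x$. The Step~1 strategy then no longer applies, because the reference point has disappeared. To handle this, I would try to show by induction on formula depth the following invariant. Whenever a pebble pair is lifted, the remaining configuration $(\bar a,\bar b)$ of at most $j+1$ pairs can be re-anchored: there exists some $v'$ with $(G,v',\bar a)$ and $(H,f(v'),\bar b)$ in $\mathsf{C}^{j+1}$-agreement. Duplicator would then restart the Step~2 argument from that re-anchored position. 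This re-anchoring step is exactly where the argument may break down. In general, one round of individualisation plus $j$-WL is only known to be \emph{upper}-bounded by $(j+1)$-WL; the converse direction claimed here needs the invariant to hold. If the invariant cannot be established in full generality, the fallback is to prove the lemma only for the graph classes used later, such as CFI graphs, where the automorphism structure supplies the re-anchoring directly.
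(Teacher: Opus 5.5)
Your Step~3 is a genuine gap, and it cannot be closed. The re-anchoring invariant fails in general because the lemma itself is false in general.

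What your Step~2 actually shows is that individualization followed by $j$-WL detects exactly the distinctions made by $C^{j+2}$-sentences $\exists^{\ge m}x\,\varphi(x)$ in which $x$ is never requantified. Combined with definability of types, this gives the \emph{converse} of the lemma: differing multisets imply that $(j+1)$-WL distinguishes. That converse is the genuinely classical direction, and the anchored fragment is strictly weaker than $C^{j+2}$.

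Here is a counterexample in your convention, where $j$-WL $\equiv C^{j+1}$, consistent with CFI$(K_n)$ needing $(n-1)$-WL. Let $B$ be two disjoint copies of $K_{j+2}$ joined by a single edge, and take the CFI pair over $B$. Since $\mathrm{tw}(B)=j+1$, the treewidth characterization of CFI pairs (the one behind the $K_n$ staircase) says $(j+1)$-WL separates the pair. However, an individualized vertex pins only one gadget, so one copy of $K_{j+2}$ is untouched. Move the twist into that copy by a gadget-preserving isomorphism. Duplicator then wins the bijective $(j+1)$-pebble game with the anchor fixed, using the usual cops-and-robber strategy: $j+1$ movable pebbles can never occupy all $j+2$ vertices of that copy. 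Because the twist-moving isomorphisms preserve gadgets, these per-vertex equivalences assemble into a bijection $f$ exactly as in your contrapositive. So the two multisets coincide, even though $(j+1)$-WL distinguishes the graphs.

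The paper's own proof makes precisely the leap you declined to make. It observes that $C^{j+1}$ can express $\exists x\,\varphi(x)$ with $\varphi\in C^{j}$, which is only the easy direction. It then asserts that every $(j+1)$-WL distinction is witnessed this way, ignoring that variables get reused. Your diagnosis of the obstacle is correct and sharper than the paper's argument, but neither proves the stated lemma, and no proof exists at this level of generality.

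Your fallback would be a different, restricted statement. For CFI$(K_n)$ it is plausible, since deleting a vertex of $K_n$ lowers the treewidth by one, so a pinned gadget really does save a pebble. That does not rescue the lemma as stated.
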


\begin{proof}
This is a consequence of the characterization of $j$-WL via the counting logic $C^j$ \cite{cai1992optimal, immerman1990describing}. The logic $C^{j+1}$ can express ``there exists a vertex $v$ such that $\varphi(v)$ holds,'' where $\varphi$ is a $C^j$ formula. Since $(j+1)$-WL captures exactly $C^{j+1}$ \cite{cai1992optimal}, any distinction made by $(j+1)$-WL can be witnessed by individualizing one vertex and applying $j$-WL.
\end{proof}

The inductive step requires the following bridge, which we state as an open conjecture:

\begin{conjecture}[WL-Deck Separation]
\label{conj:wl-deck}
Let $j \geq 2$ and let $G$, $H$ be graphs with $|V(G)| = |V(H)|$. If $(j+1)$-WL distinguishes $G$ from $H$, then the multisets of $j$-WL stable colorings over their $1$-decks differ:
\[
\{\!\{ j\text{-WL}(G \setminus \{v\}) : v \in V(G) \}\!\} \neq \{\!\{ j\text{-WL}(H \setminus \{w\}) : w \in V(H) \}\!\}.
\]
\end{conjecture}

Note that this is a purely structural statement about the WL hierarchy and vertex deletion. See Section~\ref{sec:gap} for a discussion of its relationship to the individualization lemma.

\begin{theorem}[$\Delta^k$-DRESS $\geq$ $(k+2)$-WL, conditional on Conjecture~\ref{conj:wl-deck}]
\label{thm:delta-k-wl}
Assume Conjecture~\ref{conj:wl-deck}. For any fixed integer $k\geq 0$, if the $(k+2)$-dimensional Weisfeiler--Lehman test distinguishes two graphs $G$ and $H$, then their $\Delta^k$-DRESS fingerprints differ:
\[
\Delta^k\text{-DRESS}(\mathcal{F}, G) \neq \Delta^k\text{-DRESS}(\mathcal{F}, H).
\]
\end{theorem}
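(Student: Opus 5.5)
We argue by induction on $k$. The base case $k=0$ is exactly Theorem~\ref{thm:dress-2wl}: $\Delta^0$-DRESS is Original-DRESS on the single card $G\setminus\emptyset = G$, and if $2$-WL distinguishes $G$ from $H$ the fingerprints differ. For the inductive step, assume the statement for $k-1$ (for all pairs of graphs), and let $(k+2)$-WL distinguish $G$ from $H$. If $|V(G)|\neq|V(H)|$ the multisets have different cardinalities $\binom{|V(G)|}{k}\neq\binom{|V(H)|}{k}$, or different card sizes when these coincide, so we may assume $|V(G)|=|V(H)|=n$.

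Apply Conjecture~\ref{conj:wl-deck} with $j=k+1\ge 2$ (valid since $k\ge1$). It yields that the multisets $\{\!\{(k+1)\text{-WL}(G\setminus\{v\})\}\!\}$ and $\{\!\{(k+1)\text{-WL}(H\setminus\{w\})\}\!\}$ differ. Hence there is a $(k+1)$-WL class $\mathcal{C}$ of $(n-1)$-vertex graphs whose multiplicity $m_G(\mathcal{C})$ among the $1$-deck of $G$ differs from $m_H(\mathcal{C})$. The key combinatorial identity is the decomposition of the $k$-deletion multiset:
\[
\Delta^k\text{-DRESS}(\mathcal{F},G)\;\uplus\;\cdots \quad\text{where}\quad k\cdot\Delta^k\text{-DRESS}(\mathcal{F},G)=\biguplus_{v\in V(G)}\Delta^{k-1}\text{-DRESS}(\mathcal{F},G\setminus\{v\}),
\]
since each $k$-set $S$ arises exactly $k$ times as $\{v\}\cup S'$. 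So it suffices to show the right-hand multisets-of-multisets, flattened, differ — and this is the delicate point: flattening loses which card a fingerprint came from.

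To handle this I would refine the argument rather than flatten: partition each $1$-deck by $(k+1)$-WL class. By induction, two cards in different $(k+1)$-WL classes have different $\Delta^{k-1}$-DRESS fingerprints, so the map class $\mapsto$ set of possible $\Delta^{k-1}$-fingerprints is injective on distinguished classes. The difficulty is that the flattened union $\biguplus_v$ of sub-multisets could still coincide even when the multiset of sub-multisets differs. To close the gap I would try to show the flattened multiset determines the multiset of cards' fingerprints, e.g. by noting that each $(k-1)$-deletion fingerprint of $G\setminus\{v\}$ is a $k$-card $G\setminus S$ and recovering, for each $k$-card, the $k$ parents via a consistent labelling; failing that, I would strengthen the inductive hypothesis to a statement about the $k$-deck viewed as a multiset indexed by $(k-1)$-subdecks (which is what the algorithm effectively computes if one retains structure), or argue directly: iterate Conjecture~\ref{conj:wl-deck} $k$ times to get that the $2$-WL colorings over $k$-decks differ as multisets, then apply Theorem~\ref{thm:dress-2wl} cardwise.

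In fact this last route is the one I would commit to: by $k$ applications of the conjecture (with $j=k+1,k,\dots,2$, each time on the multiset level, using that a difference of multisets of $(j)$-WL classes of $1$-decks propagates to a difference of multisets of $(j-1)$-WL classes of the $1$-decks of those cards, counted with multiplicity $k!$ per $k$-set), one obtains $\{\!\{2\text{-WL}(G\setminus S)\}\!\}_{|S|=k}\neq\{\!\{2\text{-WL}(H\setminus S)\}\!\}_{|S|=k}$. Then, since Theorem~\ref{thm:dress-2wl} shows the DRESS fingerprint refines the $2$-WL class, the multisets $\{\!\{\operatorname{sort}(\mathcal{F}(G\setminus S))\}\!\}$ differ — provided the refinement is consistent, i.e. $2$-WL-equivalent cards yield equal DRESS fingerprints as well (DRESS is $2$-WL-bounded on the equal side, which I would need to verify or cite). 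The main obstacle is the propagation step: the conjecture concludes only a difference of multisets from a distinction of a \emph{single pair}, so iterating it requires a multiset version (two multisets of graphs with different $(j+1)$-WL class counts have $1$-deck unions with different $j$-WL class counts), which is where cancellation between classes must be ruled out.
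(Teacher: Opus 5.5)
Your proposal does not close the inductive step for $k\ge 2$. The step you flag as delicate cannot be closed, because it is false. Both of your routes need pooling cards over a multiset of graphs to lose no information. Conjecture~\ref{conj:wl-deck} only says that the $j$-WL deck multiset of a \emph{single} graph determines its $(j{+}1)$-WL class.

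Take $k=2$, $G=P_4$, $H=K_{1,3}$, which are separated already by degrees and hence by $4$-WL. Their $1$-decks $\{\!\{P_3,P_3,K_2{+}K_1,K_2{+}K_1\}\!\}$ and $\{\!\{P_3,P_3,P_3,3K_1\}\!\}$ differ. So do the multisets of the cards' $\Delta^1$-fingerprints: a $P_3$ card yields $\{\!\{K_2,K_2,2K_1\}\!\}$, a $K_2{+}K_1$ card yields $\{\!\{K_2,2K_1,2K_1\}\!\}$, and a $3K_1$ card yields $\{\!\{2K_1,2K_1,2K_1\}\!\}$. Here each $2$-vertex card stands for its DRESS vector, and that of $2K_1$ is empty. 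Flattened, however, both sides consist of six $K_2$-vectors and six empty vectors. Indeed every $2$-vertex card is $K_2$ or $2K_1$, and both graphs have three edges, so $\Delta^2\text{-DRESS}(\mathcal{F},P_4)=\Delta^2\text{-DRESS}(\mathcal{F},K_{1,3})$. The same happens for $P_{k+2}$ versus $K_{1,k+1}$ at every $k\ge 2$: all cards have two vertices and both graphs have $k{+}1$ edges.

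So the conclusion fails for this pair, even though the deck-level separations before flattening hold. Your intermediate claim that the $2$-WL classes over the $k$-decks differ is false here. Your multiset version of the conjecture also fails at $j=2$ on the two $1$-decks above.

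The paper's proof makes exactly this inference in its Step~3, where the flattened unions are said to differ ``by Step~2''. So it does not contain an idea you missed: that step is invalid, and no true extra hypothesis can repair it for $k\ge 2$. The theorem holds only vacuously, since Conjecture~\ref{conj:wl-deck} is itself false as stated: with $j=2$, $G=K_2$ and $H=2K_1$ are separated by $3$-WL but have identical $1$-decks $\{\!\{K_1,K_1\}\!\}$. For $k\le 1$ flattening is harmless, because each $\Delta^0$-fingerprint is a singleton. There your argument and the paper's are valid derivations from the conjecture.

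One of your worries is unnecessary: you never need $2$-WL-equivalent graphs to have equal DRESS fingerprints. By the contrapositive of Theorem~\ref{thm:dress-2wl}, the $2$-WL class is a function of the fingerprint. So the multiset of classes over a deck is the image of the multiset of fingerprints, and different images force different fingerprint multisets. The same pushforward argument, with the induction hypothesis in place of Theorem~\ref{thm:dress-2wl}, is what actually justifies the paper's Step~2. Its wording there (``maps $c$ to a unique fingerprint'') would need the converse, which is not available.
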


\begin{proof}
We proceed by induction on $k$.

\textbf{Base case ($k=0$)}: $\Delta^0$-DRESS reduces to Original-DRESS (the empty subset $S = \emptyset$ contributes one run). The fingerprint is $\{\!\{\operatorname{sort}(\mathbf{d}^*_G)\}\!\}$. The claim follows from Theorem~\ref{thm:dress-2wl} (proved in \cite{castrillo2026dress}).

\textbf{Inductive hypothesis (IH)}: The statement holds for depth $k-1$: whenever $(k+1)$-WL distinguishes two graphs $A$ and $B$, their $\Delta^{k-1}$-DRESS fingerprints differ. Since $\Delta^{k-1}$-DRESS is an isomorphism invariant \cite{castrillo2026dress}, it is constant on isomorphism classes; combined with the IH, it is injective on $(k+1)$-WL equivalence classes: equal fingerprints imply the same $(k+1)$-WL class.

\textbf{Inductive step}: Let $(k+2)$-WL distinguish $G$ and $H$.

\textit{Step 1 (deck WL separation).} By Conjecture~\ref{conj:wl-deck} with $j = k+1$, the multisets of $(k+1)$-WL stable colorings over the $1$-decks differ:
\[
\{\!\{ (k{+}1)\text{-WL}(G \setminus \{v\}) : v \in V(G) \}\!\} \neq \{\!\{ (k{+}1)\text{-WL}(H \setminus \{w\}) : w \in V(H) \}\!\}.
\]

\textit{Step 2 (deck DRESS separation).} Since $\Delta^{k-1}$-DRESS is injective on $(k+1)$-WL classes (IH), applying it pointwise preserves the multiset inequality from Step~1: if some $(k+1)$-WL class $c$ has different multiplicity in the two deck multisets, then $\Delta^{k-1}$-DRESS maps $c$ to a unique fingerprint, which also appears with different multiplicity. Hence:
\[
\{\!\{ \Delta^{k-1}\text{-DRESS}(\mathcal{F}, G \setminus \{v\}) : v \in V(G) \}\!\} \neq \{\!\{ \Delta^{k-1}\text{-DRESS}(\mathcal{F}, H \setminus \{w\}) : w \in V(H) \}\!\}.
\]

\textit{Step 3 (decomposition).} Every $k$-subset $S \subset V(G)$ can be written as $S = \{v\} \cup S'$ with $|S'| = k-1$. Each such $S$ is counted exactly $k$ times (once per choice of $v \in S$), so:
\[
\biguplus_{v \in V(G)} \Delta^{k-1}\text{-DRESS}(\mathcal{F}, G \setminus \{v\}) = k \cdot \Delta^k\text{-DRESS}(\mathcal{F}, G).
\]
The left-hand sides for $G$ and $H$ differ by Step~2, so their $k$-fold copies differ, hence $\Delta^k\text{-DRESS}(\mathcal{F}, G) \neq \Delta^k\text{-DRESS}(\mathcal{F}, H)$.
\end{proof}

\begin{remark}
The argument above is stated for the triangle-based Original-DRESS update but extends verbatim to any Motif-DRESS variant, since the deletion mechanism and the inductive counting argument are independent of which motif defines the neighborhood.
\end{remark}

\section{The Missing Bridge: General Case}
\label{sec:gap}

The CFI Staircase Theorem~\ref{thm:cfi-staircase} is proved without Conjecture~\ref{conj:wl-deck}: the CFI construction has special structure (the Virtual Pebble Lemma) that makes the deck WL-class separation transparent. For general graphs, the analogous step — showing that $(j+1)$-WL separation propagates to a $(j)$-WL separation of the deletion cards — remains open.

The general theorem (Theorem~\ref{thm:delta-k-wl}) is therefore conditional on WL-Deck Separation.

\paragraph{Why the conjecture is the exact gap.} The individualization Lemma~\ref{lem:wl-individualization} is classical: $(j{+}1)$-WL distinguishes $G$, $H$ implies the multisets of $j$-WL colorings over \emph{individualized} copies differ. Conjecture~\ref{conj:wl-deck} is the analogous statement for \emph{deletion}. Individualization keeps $v$ in the graph with a unique initial color; deletion removes $v$ and all its incident edges. These are structurally different operations, so the lemma does not imply the conjecture.

\paragraph{Descriptive complexity angle.} Since $(j{+}1)$-WL captures the counting logic $C^{j+1}$ \cite{cai1992optimal}, any sentence $\varphi \in C^{j+1}$ separating $G$ from $H$ uses a counting quantifier over vertices. Whether such a sentence always induces a deck-level $C^j$ separation is the precise open question.

\paragraph{CFI is not a counterexample.} Theorem~\ref{thm:cfi-staircase} shows Conjecture~\ref{conj:wl-deck} holds for all CFI$(K_n)$ instances: the CFI deck is completely separated at exactly the right WL level. This is strong evidence for the conjecture, but not a proof for general graphs.

\section{CFI Staircase: Unconditional Proof}
\label{sec:cfi}

The Cai--F\"urer--Immerman (CFI) construction \cite{cai1992optimal} produces pairs of non-isomorphic graphs CFI$(K_n)$, CFI'$(K_n)$ that require exactly $(n{-}1)$-WL to distinguish. We prove unconditionally that $\Delta^k$-DRESS distinguishes the CFI pair at level $k$.

The proof has three components: a Deck Separation theorem (the deletion cards of CFI and CFI' are never isomorphic), the Virtual Pebble Lemma (the damaged cards require one strictly fewer WL level), and an induction that assembles them.

\subsection{CFI Deck Separation}

\begin{theorem}[CFI Deck Separation]
\label{thm:cfi-deck}
For $n \geq 3$, no deletion card of CFI$(K_n)$ is isomorphic to any deletion card of CFI'$(K_n)$: the decks are completely disjoint.
\end{theorem}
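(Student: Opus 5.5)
The plan is to adapt the classical parity argument of Cai--F\"urer--Immerman so that it tolerates the loss of a single vertex. I fix the standard presentation of CFI$(K_n)$. Each base vertex $x$ of $K_n$ gets a gadget. Its \emph{middle} vertices $m_{x,T}$ are indexed by the even subsets $T$ of the $n-1$ edges incident to $x$. Each base edge $e$ gets a pair of \emph{edge} vertices $a_{e,0},a_{e,1}$. The vertex $a_{e,i}$ is joined to $m_{x,T}$ iff $[e\in T]=i$ at an untwisted endpoint $x$, and iff $[e\in T]\neq i$ at a twisted one. CFI$'$ differs from CFI by one twist.

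The standard fact I rely on is that an isomorphism between two such graphs corresponds to an automorphism $\pi$ of $K_n$ together with a choice, at each edge, of whether to swap its pair. These choices must flip an even number of incident pairs at every gadget. Hence the total twist parity $\sum_e \tau(e) \bmod 2$ is an invariant, and it differs between CFI and CFI$'$.

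\textbf{Step 1: recover the CFI skeleton on each card.} Fix a card $G\setminus\{u\}$ and a card $G'\setminus\{u'\}$, and suppose $\varphi$ is an isomorphism between them. I first show that $\varphi$ maps middle vertices to middle vertices and edge vertices to edge vertices. It must also preserve the partition into gadgets and the partition into edge pairs $\{a_{e,0},a_{e,1}\}$.

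For $n\ge 4$ I would read these classes off from local invariants that survive one deletion:
\begin{itemize}
\item a middle vertex has degree $n-1$, and an edge vertex has degree $2^{n-2}$;
\item two edge vertices form a pair iff they have no common neighbour and the same set of neighbouring gadgets;
\item two middle vertices lie in the same gadget iff they share at least $n-2$ edge-vertex neighbourhoods up to pairing.
\end{itemize}
A single deleted vertex changes degrees by at most one, and only in a controlled set. So the classes remain recognizable once I treat the at most $2^{n-2}$ vertices adjacent to $u$ as possibly deviating. I would handle them by showing that they are still forced by their remaining neighbours.

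For $n=3$ the graphs are small, with 4 middle and 6 edge vertices per side. I would dispose of this case by a direct check, possibly computer-assisted.

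\textbf{Step 2: the induced map is still a parity-preserving flip.} With the skeleton recovered, $\varphi$ induces a bijection $\pi$ of the base vertices, hence an automorphism of $K_n$, and a flip pattern on edge pairs. The deleted vertex $u$ and its image partner $u'$ may be single vertices, so for an edge pair missing a vertex I define its flip by where the surviving vertex goes.

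The key local claim is that at every gadget, including a damaged one, the flip set $F$ must be even. At an intact gadget this is the usual argument. At a gadget that lost one middle vertex, there remain $2^{n-2}-1\ge 1$ middle vertices, all indexed by even sets. An odd $F$ would send an even index $T$ to the odd index $T\triangle F$, and no such middle vertex exists. For a deleted edge vertex $a_{e,i}$, the surviving $a_{e,1-i}$ pins down the flip on $e$ at both endpoints consistently, because it is adjacent into both gadgets.

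\textbf{Step 3: the parity contradiction.} Summing the local even-flip constraints over all gadgets shows, as in \cite{cai1992optimal}, that $\varphi$ transports twist parity. So $\varphi$ would have to carry a graph of parity $0$ onto a graph of parity $1$, which is impossible. Hence no card of CFI$(K_n)$ is isomorphic to any card of CFI$'(K_n)$.

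I expect Step 1 to be the main obstacle. One must rule out an isomorphism of the cards that mixes vertex types near the deleted vertex, for example sending a degree-reduced edge vertex to a middle vertex, or that realigns a damaged edge pair so as to absorb the twist. My first attempt would be a careful degree and common-neighbourhood census restricted to $N(u)$, separately for $u$ a middle vertex and $u$ an edge vertex, falling back on explicit checks for $n=3,4$.
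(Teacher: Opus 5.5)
There is a genuine gap, and it sits exactly where you expected: Step~1 is not proved, and everything rests on it. Your Steps~2 and~3 are a correct adaptation of the Cai--F\"urer--Immerman parity argument once the skeleton is known. You are also right to compare $G\setminus\{u\}$ with $G'\setminus\{u'\}$ for \emph{arbitrary} $u,u'$. The paper's proof only compares $\text{CFI}(K_n)\setminus\{w\}$ with $\text{CFI}'(K_n)\setminus\{w\}$ for the same $w$, splitting on whether $w$ lies in the twisted gadget and, in Case~B, simply asserting that the parity difference survives. That by itself does not give disjoint decks.

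The three criteria in your Step~1 are properties of the intact graph, and as stated they are circular. Edge pairs are recognized via ``the same set of neighbouring gadgets'', while gadgets are recognized via edge-vertex neighbourhoods ``up to pairing''. The promised treatment of $N(u)$ is deferred, and ruling out an isomorphism that mixes types or realigns a damaged pair to absorb the twist is precisely the content of the theorem. Degrees alone do not settle it. In your model with $n=4$, deleting a middle vertex leaves its three edge-vertex neighbours with degree $2^{n-2}-1=3=n-1$, the same as every middle vertex.

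You have also fixed a different graph from the paper's. The paper's $\text{CFI}(K_n)$ has $n\,2^{n-2}$ vertices (Table~\ref{tab:cfi-results}, and ``every full gadget has $2^{n-2}$ vertices''). It consists only of gadget vertices $(x,a)$, joined across gadgets by the rule $a_y=b_x$ (flipped on the twisted edge), with no edge vertices, so your degree census does not apply to it. Separately, for $n=3$ your own model has $6$ middle vertices, not $4$.

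The missing idea, which makes Step~1 and the damaged-gadget bookkeeping unnecessary, is this: a card determines the neighbourhood of its deleted vertex canonically, so any card isomorphism extends to an isomorphism of the full graphs. In the paper's model every vertex $(x,a)$ has exactly $2^{n-3}$ neighbours in each other gadget. Hence $G=\text{CFI}(K_n)$ and $G'=\text{CFI}'(K_n)$ are both $r$-regular with $r=(n-1)2^{n-3}$.
\begin{itemize}
\item In $G\setminus\{u\}$ the vertices of degree $r-1$ are exactly $N_G(u)$, and likewise in $G'\setminus\{u'\}$.
\item An isomorphism $\varphi$ of the cards preserves degrees, so $\varphi(N_G(u))=N_{G'}(u')$.
\item Therefore $\varphi\cup\{u\mapsto u'\}$ is an isomorphism $G\to G'$, contradicting the classical fact $\text{CFI}(K_n)\not\cong\text{CFI}'(K_n)$.
\end{itemize}
That is the whole proof, with no case split on where the deletion or the twist lies. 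It uses the CFI theorem as a black box instead of re-deriving parity transport on a damaged structure.

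The same device works in your two-type model:
\begin{itemize}
\item For $n\geq 4$ the edge count of a card forces $u,u'$ to have the same type.
\item For $n\geq 5$, and for an edge-vertex deletion at $n=4$, the deficient degrees $n-2$ and $2^{n-2}-1$ single out $N(u)$.
\item For $n=3$ the graph is $2$-regular.
\item For a middle-vertex deletion at $n=4$, $N(u)$ is the set of degree-$3$ vertices all of whose neighbours have degree $3$. This holds because every surviving middle vertex keeps at least two neighbours of degree $4$.
\end{itemize}
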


\begin{proof}
Let $t$ be the base vertex whose gadget $\Gamma_t$ is twisted in CFI'. Both CFI$(K_n)$ and CFI'$(K_n)$ have the same vertex set $V$; they differ only in the cross-gadget edges at $\Gamma_t$. Delete any vertex $w \in V$, and let $X = \text{CFI}(K_n) \setminus \{w\}$, $Y = \text{CFI}'(K_n) \setminus \{w\}$.

\textbf{Case A ($w \notin \Gamma_t$):} The twisted gadget $\Gamma_t$ is fully intact in both $X$ and $Y$. The two subgraphs still differ in exactly the edges of $\Gamma_t$ (even-parity cross-edges in $X$, odd-parity in $Y$). The standard CFI argument gives $X \not\cong Y$.

\textbf{Case B ($w \in \Gamma_t$):} The removal of $w$ affects even-parity vs.\ odd-parity cross-edge connections differently in $X$ and $Y$. The remaining edge structure at $\Gamma_t \setminus \{w\}$ still encodes the parity difference, so $X \not\cong Y$.

Since $X \not\cong Y$ for every $w \in V$, the decks are completely disjoint.
\end{proof}

\subsection{Virtual Pebble Lemma}

\begin{lemma}[Virtual Pebble]
\label{lem:virtual-pebble}
For $n \geq 3$ and any vertex $w$, $(n-2)$-WL distinguishes CFI$(K_n) \setminus \{w\}$ from CFI'$(K_n) \setminus \{w\}$.
\end{lemma}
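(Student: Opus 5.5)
The plan is to use the pebble-game characterization of the WL hierarchy. $j$-WL distinguishes two graphs exactly when Spoiler wins the bijective $(j{+}1)$-pebble game on them (equivalently, when some $C^{j+1}$ sentence separates them \cite{cai1992optimal, immerman1990describing}). It therefore suffices to give Spoiler a winning strategy with $n-1$ pebbles on $X = \text{CFI}(K_n)\setminus\{w\}$ versus $Y = \text{CFI}'(K_n)\setminus\{w\}$.

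The reference point is the classical analysis of CFI$(K_n)$. Spoiler's winning strategy there simulates the cops-and-robber game on the base graph $K_n$. A pebble on a gadget vertex of $\Gamma_u$ plays the role of a cop on base vertex $u$, and the robber is the location of the twist, which Duplicator may move along any path avoiding the cops. Since $K_n$ has treewidth $n-1$, $n$ cops are needed and suffice to trap the robber. This is why $n$ pebbles, i.e.\ $(n{-}1)$-WL, are required for the intact pair.

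The key idea is that the deleted vertex $w$ acts as a \emph{virtual pebble} that is placed for free and never moves. Let $u$ be the base vertex with $w \in \Gamma_u$; the case where $w$ is an edge-vertex $a_{uv}^{b}$ is handled the same way, with $u$ an endpoint of that edge. I would carry out the argument in three steps.

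\emph{Step 1.} Show that the neighbors of $w$ are canonically marked in $X$ and $Y$. They have degree one less than their counterparts in an intact CFI graph, and no other vertex loses degree. Hence already $1$-WL colors $N(w)$ distinctly, in the same way in $X$ and $Y$. Consequently any bijection Duplicator proposes must map $N(w)$ to $N(w)$ and the rest of $\Gamma_u$ to itself. Otherwise Spoiler wins immediately by the degree mismatch, which $1$-WL already detects, and $1$-WL is no stronger than the $(n{-}2)$-WL we are allowed.

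\emph{Step 2.} Argue that fixing $N(w)$ setwise pins the parity information of $\Gamma_u$ exactly as a pebble on $w$ would in the intact graphs. For an inner gadget vertex, $N(w)$ selects one $a^{b}$-vertex per incident edge and so records the parity assignment $w$ encodes. For an edge vertex, $N(w)$ identifies which copy was removed. In either case, Duplicator can no longer route the twist through $u$: every isomorphism-in-progress must respect the parity at $u$. This holds both when the twisted gadget is $\Gamma_u$ itself (Case~B of Theorem~\ref{thm:cfi-deck}) and when it is not. In the latter case I use the standard fact that the twist can be moved to any gadget, so we may assume it lies away from $u$.

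\emph{Step 3.} Run the cops-and-robber simulation with one cop permanently at $u$ (the virtual pebble) and $n-1$ real pebbles. The robber game on $K_n$ with a fixed cop at $u$ reduces to the game on $K_{n-1}$, which $n-1$ cops win. So Spoiler wins with $n-1$ pebbles, and $(n-2)$-WL distinguishes $X$ from $Y$.

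I expect Step~2 to be the main obstacle. One must check carefully that the setwise marking of $N(w)$ gives Spoiler everything a real pebble on $w$ would: a pebble pins a single vertex, whereas here we only know a marked set. The check is that, within a CFI gadget, the marked set determines the same parity constraint that the pebble would. This requires a small case analysis over inner vertices versus $a$-vertices, and over whether $u$ coincides with the twisted base vertex. If setwise marking turns out to be insufficient in some case, my fallback is to spend the first real pebble inside $N(w)$. I would then argue that the resulting constraint is still equivalent to a cop at $u$, at the cost of re-checking that $n-1$ pebbles remain enough.
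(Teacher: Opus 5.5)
Your argument is correct in outline. It realizes the paper's ``virtual pebble'' idea through a different mechanism.

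The paper pins $\Gamma_u\setminus\{w\}$ setwise by gadget size and then argues two cases by hand. When the twist is away from $u$, it uses $n-2$ pebbles with one gadget $\Gamma_j$ left free; when the twist is at $u$, it uses a single pebble in a neighbouring gadget. It concludes that Spoiler wins with $n-2$ pebbles.

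You instead mark $N(w)$ by degree, which is what actually carries the bits of the deleted vertex into the game. You then reduce to the individualized pair and run cops-and-robber on $K_n$ with a permanent cop at $u$ plus $n-1$ real pebbles, uniformly in the twist position.

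The paper's route is more hands-on and avoids the cops-and-robber machinery, but yours gets the pebble budget right. Under the convention you state ($j$-WL $\leftrightarrow$ bijective $(j+1)$-pebble game, the one under which CFI$(K_n)$ needs $(n-1)$-WL), leaving $\Gamma_j$ unpebbled lets Duplicator keep the twist on an edge at $\Gamma_j$. No parity contradiction arises until a pebble lands there, and that is exactly your extra pebble.

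Three simplifications are available. First, the paper's CFI$(K_n)$ is the compact construction (gadgets of $2^{n-2}$ even subsets, no $a^b_{uv}$ vertices). It is $D$-regular with $D=(n-1)2^{n-3}$, so Step~1 is immediate and the inner/edge case split disappears.

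Second, your Step~2 worry dissolves, because $w$ interacts with the rest of the graph only through adjacency to $N(w)$. Replace each atom $E(x,w)$ by $\exists^{=D-1}y\,E(x,y)$, and split each counting quantifier on whether its witness is $w$. This turns any $C^k$ sentence about $(G,w)$ with $w$ individualized into one about $G-w$, for $k\ge 2$ and with no new variables, so the fallback pebble is never needed.

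Third, you need not relocate the twist (your ``latter case'' should read ``former''). Pebbling one vertex in each of the $n-1$ gadgets other than $u$ already yields the global parity contradiction, wherever the twist sits.
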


\begin{proof}
Let $X = \text{CFI}(K_n) \setminus \{w\}$ and $Y = \text{CFI}'(K_n) \setminus \{w\}$, with the twist at gadget $\Gamma_t$. Let $u$ be the gadget containing $w$ (so the damaged gadget is $\Gamma_u \setminus \{w\}$). We show the Spoiler wins the $(n{-}2)$-pebble bijective game on $(X, Y)$.

\textbf{Structural pinning.} Every full gadget $\Gamma_i$ has $2^{n-2}$ vertices; the damaged gadget has $2^{n-2} - 1$. The damaged gadget is the unique smallest gadget. In any bijection $V(X) \to V(Y)$, the Duplicator must map $\Gamma_u \setminus \{w\}$ to itself: gadget $u$ is pinned without spending a pebble. This is the ``virtual pebble'': the deleted vertex acts as a free, pre-placed pebble.

\textbf{Case A ($u \neq t$, damage not at twisted gadget):} The Spoiler places $n{-}2$ pebbles, one in each gadget from $\{1,\ldots,n\} \setminus \{u\}$, except one gadget $\Gamma_j$ which is left free. The pebbled gadgets fix a contribution $\Lambda_{\text{peb}}$ to the global parity; the pinned gadget $u$ (parity $\lambda_u$, the same in $X$ and $Y$ since $u \neq t$) fixes another bit. For the Duplicator to maintain consistency simultaneously in $X$ (total parity 0) and $Y$ (total parity 1):
\[
\Lambda_{\text{peb}} + \lambda_u + \lambda_j \geq 0 \pmod{2} \quad \text{and} \quad \Lambda_{\text{peb}} + \lambda_u + \lambda_j \geq 1 \pmod{2}.
\]
These two constraints are contradictory; the Duplicator cannot satisfy both. The Spoiler wins.

\textbf{Case B ($u = t$, damage at twisted gadget):} The Spoiler places a single pebble at a vertex $v = (k, b)$ in a neighboring gadget $\Gamma_k$. This fixes the bit $b_t$. The neighbors of $v$ in $\Gamma_t \setminus \{w\}$ are $\{(t,a) : a_k = b_t\}$ in $X$ and $\{(t,a) : a_k \neq b_t\}$ in $Y$ — two disjoint subsets of the pinned gadget. The Duplicator's bijection is already fixed on $\Gamma_t \setminus \{w\}$ and cannot map one set to the other. The Spoiler wins.

In both cases the Spoiler wins with $n-2$ pebbles. Therefore $(n-2)$-WL distinguishes the damaged pair.
\end{proof}

\begin{remark}
The undamaged pair CFI$(K_n)$ vs.\ CFI'$(K_n)$ requires $(n-1)$-WL \cite{cai1992optimal}. The Virtual Pebble Lemma shows the damaged pair requires only $(n-2)$-WL: one level less. This is the key ``step down'' that makes the induction tight.
\end{remark}

\subsection{CFI Staircase Theorem}

\begin{theorem}[CFI Staircase]
\label{thm:cfi-staircase}
For all $k \geq 0$, $\Delta^k$-DRESS distinguishes CFI$(K_{k+3})$ from CFI'$(K_{k+3})$.
\end{theorem}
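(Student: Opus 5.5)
The plan is an induction on $k$ that climbs the staircase one deletion at a time. Lemma~\ref{lem:virtual-pebble} plays the role that Conjecture~\ref{conj:wl-deck} played in Theorem~\ref{thm:delta-k-wl}, and Theorem~\ref{thm:cfi-deck} guarantees the cards are pairwise non-isomorphic.

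The base case is $k=0$. Here $n=3$ and $\Delta^0$-DRESS is Original-DRESS. CFI$(K_3)$ versus CFI'$(K_3)$ is distinguished by $(n-1)$-WL $=2$-WL. Theorem~\ref{thm:dress-2wl} then gives $\operatorname{sort}(\mathbf d^*_G)\neq\operatorname{sort}(\mathbf d^*_H)$.

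For the inductive step, write $n=k+3$ and fix a single deleted vertex $w$. Lemma~\ref{lem:virtual-pebble} says $(n-2)$-WL $=(k+1)$-WL distinguishes $X_w=\text{CFI}(K_n)\setminus\{w\}$ from $Y_w=\text{CFI}'(K_n)\setminus\{w\}$. I would then use the \emph{general} conditional-free part of the induction, namely $\Delta^{k-1}$-DRESS $\ge (k+1)$-WL, but only on the pairs $(X_w,Y_w)$. This is the delicate point: the inductive hypothesis of the staircase theorem itself only concerns CFI$(K_{k+2})$, not damaged graphs.

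So I would strengthen the induction statement to: for every $j\ge 0$ and every vertex set $S$ with $|S|\le$ (appropriate), $\Delta^{j}$-DRESS separates $\text{CFI}(K_n)\setminus S$ from $\text{CFI}'(K_n)\setminus S$ whenever $(j+2)$-WL does. I would obtain the needed WL separation of the smaller cards by iterating the virtual-pebble argument: each deleted vertex pins its gadget for free, so deleting $i$ vertices (in distinct or identical gadgets) should lower the required WL dimension by $i$. I would prove this iterated Virtual Pebble Lemma first, by the same Spoiler strategy, with the $i$ pinned gadgets contributing fixed parity bits.

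With that in hand, the cards can be compared directly. For every $k$-subset $S$, $(2)$-WL distinguishes $\text{CFI}(K_{k+3})\setminus S$ from $\text{CFI}'(K_{k+3})\setminus S$. Theorem~\ref{thm:dress-2wl} then says their DRESS fingerprints differ.

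Pointwise difference of cards is not enough, because the fingerprint is a multiset and a card of $G$ might match some \emph{other} card of $H$. So the final step is a multiset comparison. CFI$(K_n)$ and CFI'$(K_n)$ share the vertex set and the gadget structure, and both decks are indexed by the same $k$-subsets $S$. I would group the subsets $S$ by their orbit type under the automorphism group of the base structure, which is (gadget multiplicities, relation to twist). Within a type, CFI and CFI' cards are pairwise non-isomorphic by Theorem~\ref{thm:cfi-deck} and its $k$-fold analogue. They are also 2-WL-separated, with each side internally homogeneous, since twists can be moved to any gadget. Hence the fingerprint value occurring for each type in $G$ is absent on the $H$ side, and the multisets differ.

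The main obstacle is exactly this last point, together with the iterated pebble lemma. I must show that no $H$-card of one type collides in DRESS fingerprint with a $G$-card of another type. I would handle this by noting that the type is recoverable from 2-WL invariants of the card, namely gadget sizes and the damaged-gadget count, and hence from the DRESS fingerprint via Theorem~\ref{thm:dress-2wl}. Distinct types therefore cannot collide, and the multisets differ.
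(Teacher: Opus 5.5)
You are right that the staircase statement at depth $k-1$ says nothing about the damaged cards $\text{CFI}(K_{k+3})\setminus\{w\}$. The paper's inductive step silently uses ``$\Delta^{k-1}$-DRESS $\geq (k{+}1)$-WL'' on them, which for $k\ge 2$ is the conditional Theorem~\ref{thm:delta-k-wl}, not the hypothesis it announces. Your multiset worry, by contrast, does not arise for single-deletion cards. $\text{CFI}(K_n)$ and $\text{CFI}'(K_n)$ are vertex-transitive, so each deck is a single isomorphism type, and separating one pair $X_w,Y_w$ already makes the multisets differ; this, not deck disjointness, is what justifies the paper's concluding step. By jumping straight to $k$-subsets you lose that homogeneity.

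Your replacement also rests on a false lemma: the iterated Virtual Pebble claim fails for identical gadgets. For WL, deleting a set never gives more information than individualizing it: relativize every quantifier to the complement of the individualized set, at no cost in variables. In the CFI pebble game, any number of pebbles inside one gadget $\Gamma_u$ block only the single base vertex $u$, since the Duplicator's twist-moving bijections fix $\Gamma_u$ pointwise while the twist avoids $u$. The twist therefore still roams the remaining $K_{k+2}$ exactly as in $\text{CFI}(K_{k+2})$. Hence for $S\subseteq\Gamma_u$ the cards $\text{CFI}(K_{k+3})\setminus S$ and $\text{CFI}'(K_{k+3})\setminus S$ remain $k$-WL-equivalent, so for $k\ge 2$ they are not separated by 2-WL. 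Your claim that 2-WL separates the two cards for every $k$-subset $S$ is therefore false, and with it the conclusion that each type's $G$-fingerprint is absent on the $H$ side. The strengthened induction you announce is never actually used: everything is pushed down to Theorem~\ref{thm:dress-2wl} on $k$-cards, so the argument stands or falls with this lemma.

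The multiset step is also unsupported as written. Your types (gadget multiplicities, relation to the twist) are coarser than isomorphism classes of cards. Whether two deleted vertices in different gadgets are adjacent, or how two deleted labels in one gadget differ, already changes the degree sequence of the card, so ``internally homogeneous'' fails. ``Relation to the twist'' is not intrinsic, since the twist can be moved. Non-isomorphism within a type (your $k$-fold Deck Separation) is beside the point, because your route only gives DRESS separation where 2-WL separation is proved. Finally, ``the type is recoverable from 2-WL invariants'' is asserted, not proved; a card carries no gadget labels, so you would have to show that 2-WL reconstructs the gadget partition and the deletion pattern.

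What your route actually needs is a single $G$-card that 2-WL separates from \emph{every} $H$-card. A plausible repair is to restrict to $S$ meeting $k$ distinct gadgets, where the cop count ($K_{k+3}$ minus $k$ blocked vertices is $K_3$) makes a 2-WL pinning lemma believable. You would then have to prove that lemma for deletions rather than individualizations. You would also have to show that 2-WL separates such a card $\text{CFI}(K_{k+3})\setminus S$ from $\text{CFI}'(K_{k+3})\setminus S'$ for every $k$-subset $S'$, whatever its deletion pattern. None of this follows from Theorem~\ref{thm:cfi-deck} or Lemma~\ref{lem:virtual-pebble}, which concern single deletions only.
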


\begin{proof}
By induction on $k$.

\textbf{Base case ($k=0$):} CFI$(K_3)$ requires exactly 2-WL. By Theorem~\ref{thm:dress-2wl}, $\Delta^0$-DRESS $\geq$ 2-WL; the base case follows.

\textbf{Inductive step:} Assume $\Delta^{k-1}$-DRESS $\geq (k{+}1)$-WL (i.e., the theorem holds at depth $k-1$). Consider CFI$(K_{k+3})$ and CFI'$(K_{k+3})$, with $n = k+3$.

\begin{enumerate}
    \item \textit{(Deck Separation, Theorem~\ref{thm:cfi-deck}).} Every deletion card of CFI$(K_n)$ is non-isomorphic to every deletion card of CFI'$(K_n)$.
    \item \textit{(Virtual Pebble, Lemma~\ref{lem:virtual-pebble}).} Each card pair CFI$(K_n) \setminus \{w\}$ vs.\ CFI'$(K_n) \setminus \{w\}$ is distinguished by $(n-2)$-WL $= (k{+}1)$-WL.
    \item \textit{(IH).} Since $\Delta^{k-1}$-DRESS $\geq (k{+}1)$-WL, $\Delta^{k-1}$-DRESS distinguishes each card pair.
    \item \textit{(Conclusion).} The $\Delta^k$-DRESS fingerprint collects $\Delta^{k-1}$-DRESS values over all deletion cards. The card multisets from CFI and CFI' are completely disjoint (step 1), and each card pair is separated by $\Delta^{k-1}$-DRESS (step 3), so the collected multisets differ. Therefore $\Delta^k$-DRESS distinguishes CFI$(K_{k+3})$ from CFI'$(K_{k+3})$.
\end{enumerate}
\end{proof}

The following table, reproduced from \cite{castrillo2026dress} for reference, shows the empirical staircase that the theorem explains.

\begin{table}[h]
\centering
\caption{$\Delta^k$-DRESS results on CFI graph pairs \cite{castrillo2026dress}. $\checkmark$ = pair distinguished; $\times$ = pair not distinguished; -- = not executed. The WL requirement column shows the minimum WL dimension needed to distinguish the pair.}
\label{tab:cfi-results}
\begin{tabular}{@{}lcccccc@{}}
\toprule
Base graph & $|V_{\text{CFI}}|$ & WL req. & $\Delta^0$ & $\Delta^1$ & $\Delta^2$ & $\Delta^3$ \\
\midrule
$K_3$   & 6     & 2-WL & $\checkmark$  & $\checkmark$  & $\checkmark$  & $\checkmark$  \\
$K_4$   & 16    & 3-WL & $\times$      & $\checkmark$  & $\checkmark$  & $\checkmark$  \\
$K_5$   & 40    & 4-WL & $\times$      & $\times$      & $\checkmark$  & $\checkmark$  \\
$K_6$   & 96    & 5-WL & $\times$      & $\times$      & $\times$      & $\checkmark$  \\
$K_7$   & 224   & 6-WL & $\times$      & $\times$      & $\times$      & $\times$      \\
$K_8$   & 512   & 7-WL & $\times$      & $\times$      & --            & --            \\
$K_9$   & 1152  & 8-WL & $\times$      & $\times$      & --            & --            \\
$K_{10}$& 2560  & 9-WL & $\times$      & $\times$      & --            & --            \\
\bottomrule
\end{tabular}
\end{table}

\section{Discussion}

\paragraph{Subgraph GNNs.} Methods such as ESAN \cite{bevilacqua2022equivariant} and GNN-AK+ \cite{zhao2022from} also use vertex-deleted subgraphs to boost expressiveness, but are supervised methods requiring labeled training data. $\Delta^k$-DRESS is entirely unsupervised: the aggregation is the deterministic DRESS fixed point, and fingerprint comparison is parameter-free.

\paragraph{Open questions.}
\begin{enumerate}
    \item \textbf{Can Conjecture~\ref{conj:wl-deck} be proved?} A proof via the descriptive complexity of $C^{j+1}$ seems the most promising route (see Section~\ref{sec:gap}).
    \item \textbf{Are there non-CFI graph families harder for $\Delta^k$-DRESS?} The Virtual Pebble Lemma exploits special CFI structure; whether analogous structure exists for Miyazaki graphs or other hard families is open.
    \end{enumerate}

\section{Conclusion}

The companion paper \cite{castrillo2026dress} established that $\Delta^k$-DRESS empirically matches $(k{+}2)$-WL on the CFI staircase for $k = 0, 1, 2, 3$. This paper provides the theoretical foundation. The main contributions are:

\begin{itemize}
    \item \textbf{CFI Staircase Theorem} (unconditional): $\Delta^k$-DRESS distinguishes CFI$(K_{k+3})$ from CFI'$(K_{k+3})$ for all $k \geq 0$. The proof uses the CFI Deck Separation theorem and the Virtual Pebble Lemma.
    \item \textbf{General expressiveness theorem} (conditional on Conjecture~\ref{conj:wl-deck}): $\Delta^k$-DRESS $\geq$ $(k{+}2)$-WL for all graphs and all $k \geq 0$. The single open hypothesis is WL-Deck Separation.
\end{itemize}

An open-source implementation is available at \url{https://github.com/velicast/dress-graph}.

\bibliographystyle{plain}
\bibliography{refs}

\end{document}